\newtheorem{proposition}{Proposition}
\begin{document}
\title{MIMO with Analogue $1$-bit Phase Shifters: \\A Quantum Annealing Perspective}

\author{Ioannis Krikidis, \IEEEmembership{Fellow, IEEE}\vspace*{-5mm}
\thanks{I. Krikidis is with the Department of Electrical and Computer Engineering, University of Cyprus, Cyprus (email: krikidis@ucy.ac.cy).}}

\maketitle

\begin{abstract}
In this letter, we study the analogue pre/post-coding vector design for a point-to-point multiple-input multiple-output (MIMO) system with $1$-bit phase shifters. Specifically, we focus on the signal-to-noise ratio (SNR) maximization problem which corresponds to a combinatorial NP-hard due to the binary phase resolution. Two classical computation heuristics are proposed {\it i.e.,} i) an $1$-bit real-valued approximation of the optimal digital designs, and ii) an alternating optimization where a Rayleigh quotient problem is solved at each iteration. An iterative quantum annealing (QA)-based heuristic is also investigated, which outperforms classical counterparts and achieves near-optimal performance while ensuring polynomial time complexity. Experimental results in a real-world D-WAVE QA device validate the efficiency of the proposed QA approach. 
\end{abstract}
\vspace{-0.1cm}
\begin{keywords}
MIMO systems, quantum computing, quantum annealing, alternating optimization, D-WAVE, pre/post-coding. 
\end{keywords}

\vspace{-0.2cm}
\section{Introduction}

\IEEEPARstart{l}{arge-scale} multiple-input multiple-output (MIMO) is an essential technology for the upcoming 6G communication systems towards extremely high spectral efficiency and reliability.  Nevertheless, conventional fully-digital implementations may be problematic, due to the increased hardware complexity and power consumption \cite{ALI}. To address this practical challenge, various MIMO architectures have been proposed in the literature to balance the trade-off between performance and implementation complexity. To name some of them, hybrid analogue-digital architectures, MIMO with low-resolution digital-to-analogue and/or analogue-to-digital converters, MIMO with analogue-only processing, have been proposed in the literature \cite{ALI,WAN,SIL}. Analogue-only MIMO architectures allow high-dimensional analogue pre/post-coding through phase shifters \cite[Sec. V. F]{ALI}, and are suitable for high frequency bands where the cost of radio frequency chains is a critical bottleneck. To further reduce complexity, low-resolution phase shifters ({\it e.g.,} $1$-bit) are used in practice \cite{WAN,SIL}; this discretization casts the pre/post-coding design into NP-hard combinatorial problems which can be solved through exhaustive search (ES) with exponential complexity.

A promising technology to solve NP-hard problems is to employ physics-inspired quantum computing techniques, which overcome classical computing barriers due the peculiar properties of quantum mechanics. Specifically, quantum-annealing (QA) is an analogue quantum model relying on the principles of the Adiabatic Theorem which enforces a quantum system to convergence to the ground state (lowest energy state) \cite{MCG} through Adiabatic evolution. By encoding the final Hamiltonian of the system to the desired problem, QA can be used to solve NP-hard combinatorial problems which are represented as quadratic unconstrained binary optimization (QUBO) instances.  Most of the relevant QA literature relies on the maximum-likelihood detection problem for large MIMO systems \cite{JAM1,JAM2}. Recent studies apply QA solvers in other wireless communications problems such as beamforming design in reconfigurable intelligent surfaces \cite{ROS}. The integration of QA techniques in wireless communications unlocks new degrees of freedom and enables the implementation of optimal solutions without computation restrictions.

\begin{figure}
	\centering
	\includegraphics[width=0.8\linewidth]{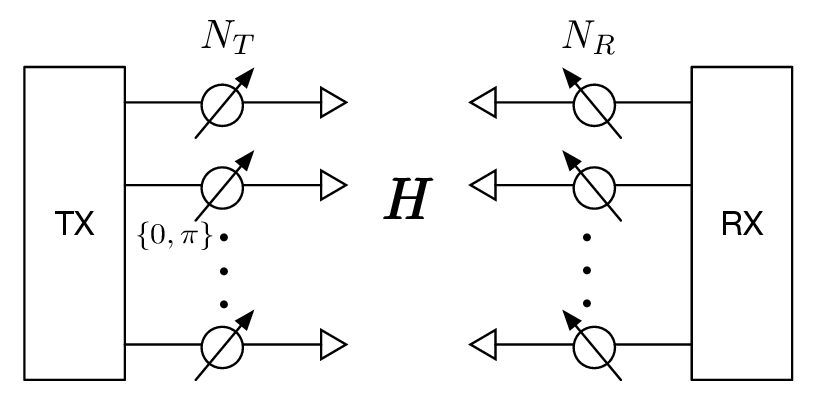}
	\vspace{-0.35cm}
	\caption{Point-to-point MIMO with $N_T$ transmit antennas, $N_R$ receive antennas and analogue $1$-bit pre/post-coding.}\label{fig_sys}
\end{figure}

In this work, we study the problem of analogue pre/post-coding design for a point-to-point MIMO with $1$-bit phase resolution \cite{WAN, SIL}; the analogue vectors are designed to maximize the received signal-to-noise ratio (SNR). The considered problem is combinatorial NP-hard and its optimal solution refers to ES with exponential complexity. Firstly, we introduce two classical-computing heuristics which discretize in a different fashion the solution of the equivalent continuous complex-valued problem {\it i.e.,} i) an $1$-bit real-valued approximation of the optimal digital solution based on the singular value decomposition (SVD) of the MIMO channel, and ii) an alternating optimization scheme that solves a complex-valued Rayleigh quotient problem at each iteration. Then, an iterative QA-based technique is proposed that takes into account the binary structure of the combinatorial design problem. The proposed algorithm is based on alternating optimization and solves appropriate QUBO problems at each iteration. Simulation and experimental results in a state-of-the-art quantum device (D-WAVE QA) \cite{WAVE} show that the quantum solution outperforms classical heuristics and achieves near-optimal performance (similar to ES), while ensuring polynomial time complexity.

{\it Notation:} Lower and upper case bold symbols denote vectors and matrices, respectively, the superscripts $(\cdot)^{\top}$, $(\cdot)^{H}$  denote
transpose and conjugate transpose, respectively, $\textrm{diag}(\pmb{x})$ is a 
diagonal matrix whose main diagonal is $\pmb{x}$, $\mathcal{CN}(\mu,\sigma^2)$ represents the complex Gaussian distribution with mean $\mu$ and variance $\sigma^2$, $\mathbbm{C}^{x\times y}$ denotes the space of $x\times y$
matrices with complex entries, $\mathcal{R}(\cdot)$ denotes the real part of its complex argument, $\mathbbm{1}$ denotes an all-ones column vector of appropriate dimension, and $\|\cdot\|_{\max}$ denotes the max-norm.

\vspace{-0.3cm}
\section{System model}

We consider a MIMO system consisting of $N_T$ and $N_R$ transmit and receive antennas, respectively. Let $\pmb{H} \in \mathbbm{C}^{N_R\times N_T}$ denote the MIMO channel matrix, where entries correspond to the channels between the transmit and receive antennas {\it i.e.}, $h_{i,j}$ is the channel coefficient between the $j$-th transmit antenna and the $i$-th receive antenna. We assume normalized Rayleigh block fading channels {\it i.e.,} $h_{i,j}\sim \mathcal{CN}(0,1)$. Fig. \ref{fig_sys} schematically presents the system model.

The MIMO system operates in the {\it beamforming mode} (single data flow) to maximize the received SNR and thus pre/post- processing vectors are adjusted accordingly \cite{HAM}. To reduce complexity and power consumption, both the transmitter and the receiver are equipped with $1$-bit resolution phase shifters\footnote{It serves as a useful guideline for more sophisticated schemes ({\it e.g.,} single or dual phase-shifters with higher resolution {\it etc.})}. Let $\pmb{f} \in \mathcal{S}^{N_T\times 1}$ and $\pmb{g} \in \mathcal{S}^{N_R\times 1}$ denote the (unormalized) pre-coding and post-coding vectors, respectively, taking values in the (spin) set $\mathcal{S}\triangleq \{-1,+1\}$ ($-1$ corresponds to a phase shift $\pi$ and $+1$ corresponds to a phase shift $0$); the normalized pre/post-coding vectors are given by $\pmb{f}/\|\pmb{f}\|=\pmb{f}/\sqrt{N_T}$ and $\pmb{g}/\|\pmb{g}\|=\pmb{g}/\sqrt{N_R}$, respectively. We assume a perfect channel state information at both the transmitter and the receiver. The received SNR is given by
\vspace{-0.2cm}
\begin{align}
\rho(\pmb{g},\pmb{f})=\frac{P|\pmb{g}^T\pmb{H}\pmb{f}|^2}{N_T N_R \sigma^2},\label{snr_ex}
\end{align}
where $\sigma^2$ is the variance of the additive white Gaussian noise, $P$ is the transmit power, while the terms $N_T$ and $N_R$ in the denominator are due to the power normalization of the pre/post-coding vectors. Since the objective of the MIMO system is to maximize the received SNR, we introduce the following design problem 
\vspace{-0.2cm}
\begin{align}
&\max_{\pmb{f} \in \mathcal{S}^{N_T\times 1},\; \pmb{g} \in \mathcal{S}^{N_R \times 1}}\; |\pmb{g}^T\pmb{H}\pmb{f}|^2. \label{norm}
\end{align}
Due to the binary nature of the analogue pre/post-coding vectors, the above optimization problem is combinatorial NP-hard; the optimal solution requires ES over all the possible pre/post-coding vectors. 

\vspace{-0.6cm}
\subsection{Exhaustive search (ES)- Benchmark}\label{es1}

The ES scheme evaluates the SNR expression in \eqref{snr_ex} for all the possible transmit/receive vectors and returns the solution with the maximum SNR. The ES requires $2^{N_T+N_R}$ computations and therefore its complexity becomes exponential with the number of transmit/receive antennas. A more intelligent ES scheme can take into account the computation symmetry {\it i.e.,} $\rho(\pmb{g},\pmb{f})=\rho(\pm\pmb{g}, \pm \pmb{f})$ and therefore the number of ES computations decreases to $2^{N_T+N_R-2}$; for both ES schemes, the implementation is prohibited for large-scale MIMO topologies.

\vspace{-0.4cm}
\section{Design $1$-bit analogue pre/post-coding vectors}

In this section, the proposed classical and quantum computing heuristics are presented.

\vspace{-0.5cm}
\subsection{SVD-based design}  

This algorithm is inspired by the optimal (digital) solution for the considered problem which is the single-mode eigenbeamforming \cite[Sec. 3.6]{HAM}; the optimal unquantized solution is given by the first right/left eigenvectors of the SVD of the matrix $\pmb{H}$. More specifically, let $\pmb{H}=\pmb{U}\pmb{\Lambda}^{\frac{1}{2}}\pmb{V}^H$ denote the SVD decomposition \cite{STR} of the MIMO channel matrix $\pmb{H}$, where $\pmb{\Lambda}=\textrm{diag}(\lambda_1, \ldots, \lambda_{M})$ is a diagonal matrix with $\lambda_1\geq \lambda_2\geq\cdots \geq \lambda_M$, $\mathbf{U}=[\pmb{u}_1, \pmb{u}_2, \ldots, \pmb{u}_{M}] \in \mathbbm{C}^{N_R\times M}$ and $\pmb{V}=[\pmb{v}_1, \pmb{v}_2, \ldots, \pmb{v}_{M}] \in \mathbbm{C}^{N_T\times M}$ are both complex unitary matrices of
appropriate dimension, and $M=\min(N_T,N_R)$. Then, the optimal digital pre/post-coding vectors are equal to $\pmb{v}_1$ and $\pmb{u}_1$, respectively, corresponding to a maximum SNR $\rho=P\lambda_1/(N_T N_R \sigma^2)$. The proposed algorithm approximates the optimal digital pre/post-coding vectors with the $1$-bit real-valued vectors that minimize the mean square error (MSE). Specifically, we introduce the following (decoupled) optimization problems  
\begin{align}
&\min_{\pmb{f} \in \mathcal{S}^{N_T\times 1}}\|\pmb{v}_1-\pmb{f}\|^2,\;\;\;\;\min_{\pmb{g} \in \mathcal{S}^{N_R\times 1}}\|\pmb{u}_1-\pmb{g}\|^2.\label{obj1}
\end{align}
The solution to the above optimization problems is given by the following proposition. 
\begin{proposition}
The $1$-bit analogue pre/post-coding vectors that minimize the MSE expressions in \eqref{obj1} are given by
\begin{align}
\pmb{f}^{\textrm{svd}}=\textrm{sign}(\mathcal{R}(\pmb{v}_1)),\;\;\pmb{g}^\textrm{svd}=\textrm{sign}(\mathcal{R}(\pmb{u}_1)),
\end{align}
where the operator $\textrm{sign}(\cdot)$ returns the sign of its argument. 
\end{proposition}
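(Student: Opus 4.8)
The plan is to exploit the fact that both problems in \eqref{obj1} are separable across coordinates once the squared norm is expanded, and that the binary constraint $\pmb{f}\in\mathcal{S}^{N_T\times 1}$ fixes each entry's magnitude. I will treat only the pre-coding problem $\min_{\pmb{f}}\|\pmb{v}_1-\pmb{f}\|^2$ in detail, since the post-coding problem for $\pmb{g}$ and $\pmb{u}_1$ is identical after relabelling.

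First I would expand the objective entrywise. Writing $v_{1,k}$ and $f_k$ for the $k$-th components, and using that $f_k$ is real so that $f_k=\overline{f_k}$, each term satisfies $|v_{1,k}-f_k|^2=|v_{1,k}|^2-2f_k\,\mathcal{R}(v_{1,k})+f_k^2$. The crucial observation is that the spin constraint $f_k\in\mathcal{S}=\{-1,+1\}$ forces $f_k^2=1$ regardless of the choice of sign. Summing over $k$ then gives
\begin{align}
\|\pmb{v}_1-\pmb{f}\|^2=\|\pmb{v}_1\|^2+N_T-2\sum_{k=1}^{N_T}f_k\,\mathcal{R}(v_{1,k}),\nonumber
\end{align}
where the first two terms are constants independent of $\pmb{f}$. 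Hence minimizing the MSE is equivalent to maximizing the linear form $\sum_k f_k\,\mathcal{R}(v_{1,k})$.

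Next I would note that this linear maximization fully decouples: the entries $f_k$ are unconstrained relative to one another, so the sum is maximized by maximizing each term $f_k\,\mathcal{R}(v_{1,k})$ independently over $f_k\in\{-1,+1\}$. Since the optimal sign for a scalar product $f_k\,c$ with $f_k=\pm1$ is $f_k=\textrm{sign}(c)$, the componentwise optimum is $f_k=\textrm{sign}(\mathcal{R}(v_{1,k}))$, which is exactly $\pmb{f}^{\textrm{svd}}=\textrm{sign}(\mathcal{R}(\pmb{v}_1))$. The same chain of equalities applied to $\|\pmb{u}_1-\pmb{g}\|^2$ yields $\pmb{g}^{\textrm{svd}}=\textrm{sign}(\mathcal{R}(\pmb{u}_1))$, completing the argument.

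The computation here is elementary, so there is no substantial analytic obstacle; the only point requiring a word of care is the degenerate case $\mathcal{R}(v_{1,k})=0$, where the corresponding term contributes nothing to the objective and either sign is optimal. In that situation the stated solution is still a valid minimizer (the minimizer is simply non-unique in that coordinate), so the convention adopted by the $\textrm{sign}(\cdot)$ operator for a zero argument is immaterial to optimality. I would remark on this tie-breaking briefly to make the claim fully rigorous.
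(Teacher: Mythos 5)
Your proposal is correct and follows essentially the same route as the paper's own proof: expand the squared norm, use $\|\pmb{f}\|^2=N_T$ to reduce the problem to maximizing the linear form $\pmb{f}^T\mathcal{R}(\pmb{v}_1)$, and optimize the sign of each coordinate independently. Your explicit treatment of the degenerate case $\mathcal{R}(v_{1,k})=0$ is a small point of extra care that the paper's proof leaves implicit, but it does not change the argument.
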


\begin{proof}
We show the proof for the pre-coding vector (the analysis is similar for the post-coding vector). The first objective function in \eqref{obj1} can be written as
\begin{align}
\|\pmb{v}_1-\pmb{f}\|^2&=\|\pmb{v}_1\|^2+\|\pmb{f}\|^2-2\pmb{f}^T\mathcal{R}(\pmb{v}_1) \nonumber \\
&=1+N_T-2\pmb{f}^T\mathcal{R}(\pmb{v}_1), \label{obj2}
\end{align}
with $\pmb{f}^T\mathcal{R}(\pmb{v}_1)=\sum_{i=1}^{N_T}f_{i}\mathcal{R}(\pmb{v}_{1,i})\leq \sum_{i=1}^{N_T}|f_{i}\mathcal{R}(\pmb{v}_{1,i})|$, where equality holds for $f_{i}\mathcal{R}(\pmb{v}_{1,i})\geq 0 \Rightarrow \textrm{sign}(f_i)=\textrm{sign}(\mathcal{R}(\pmb{v}_{1,i})$; therefore $\pmb{f}^{\textrm{svd}}=\arg \min_{\pmb{f} \in \mathcal{S}^{N_T\times 1}}\|\pmb{v}_1-\pmb{f}\|^2=\arg \min_{\pmb{f} \in \mathcal{S}^{N_T\times 1}} -\pmb{f}^T\mathcal{R}(\pmb{v}_1)= \textrm{sign}(\mathcal{R}(\pmb{v}_1)),$
which completes the proof.
\end{proof}
The complexity of this scheme is mainly related to the computation of the SVD of the channel matrix \cite[II.1]{STR}; despite its low computation complexity, this heuristic suffers from high quantization distortion (due to the $1$-bit real-valued approximation of the optimal complex-valued SVD solution) and therefore its performance is highly suboptimal.

\vspace{-0.3cm}
\subsection{Rayleigh quotient-based design}

We propose an iterative technique (alternating optimization) to design the analogue pre/post-coding vectors by using the maximization of the Rayleigh quotient \cite[I.10]{STR}. More specifically, the objective function in \eqref{norm} can be written as follows
\begin{align}
|\pmb{g}^T\pmb{H}\pmb{f}|^2=\pmb{g}^T(\pmb{H}\pmb{f}\pmb{f}^T\pmb{H}^H)\pmb{g}=\pmb{f}^T(\pmb{H}^H\pmb{g}\pmb{g}^T\pmb{H})\pmb{f}.\label{exp0}
\end{align}
Since the matrices in the middle of the above expressions are positive-semidefinite, we  propose an alternating optimization technique by relaxing the spin (binary) vectors. Specifically, given a fixed post-coding vector $\pmb{g}$, the SNR maximization problem can be written as a Rayleigh quotient with 
\begin{align}
\pmb{f}_r^*=\arg \max_{\pmb{f}_r} \frac{\pmb{f}_r^T(\pmb{H}^H\pmb{g}\pmb{g}^T\pmb{H})\pmb{f}_r}{\pmb{f}_r^T\pmb{f}_r}=\pmb{z}_1(\pmb{H}^H\pmb{g}\pmb{g}^T\pmb{H}),
\end{align}   
where $\pmb{f}_r$ denotes the relaxed (complex) pre-coding vector with $\|\pmb{f}_r\|=1$, and $\pmb{z}_1(\pmb{A})$ returns the eigenvector of the matrix $\pmb{A}$ corresponding to the maximum eigenvalue \cite{STR}. The binary pre-coding vector is then given by $\pmb{f}=\textrm{sign}(\mathcal{R}(\pmb{f}_r^*))$ through a $1$-bit approximation of the real part.  In the second step of the algorithm, we fix the pre-coding vector $\pmb{f}$ (by using the previous solution) and we optimize the SNR expression with the respect to the post-coding vector $\pmb{g}$ by using a similar procedure; the optimal relaxed post-coding vector is given by the following Rayleigh quotient
\begin{align}
\pmb{g}_r^*=\arg \max_{\pmb{g}_r} \frac{\pmb{g}_r^T(\pmb{H}\pmb{f}\pmb{f}^T\pmb{H}^H)\pmb{g}_r}{\pmb{g}_r^T\pmb{g}_r}=\pmb{z}_1 (\pmb{H}\pmb{f}\pmb{f}^T\pmb{H}^H),
\end{align} 
which gives $\pmb{g}=\textrm{sign}(\mathcal{R}(\pmb{g}_r^*))$. This process is repeated until convergence or a maximum number of iterations $K$ is achieved. A simple modification of this algorithm is to keep the relaxed complex vectors $\pmb{f}_r$, $\pmb{g}_r$ until convergence and truncate into $1$-bit real-valued vectors at the end; both heuristics are considered in our numerical studies. The pseudocode of the two Rayleigh quotient-based algorithms is presented in Algorithm \ref{algo1} (RQ) and Algorithm \ref{algo2} (RQ-M), respectively. It is worth noting that both RQ heuristics discretize the solution of the relaxed (complex-valued) problem; although the relaxed problem converges to global/local digital solution, there is not any converge guarantee for the considered RQ heuristics.

\begin{algorithm}[t]
\caption{Rayleigh quotient pre/post-coding}\label{algo1}\vspace{1mm}
\hspace*{\algorithmicindent}\textbf{Input:} $\pmb{H}$, initial vectors $\pmb{g}^{(0)} \in \mathcal{S}^{N_R}$, $\pmb{f}^{(0)} \in \mathcal{S}^{N_T}$, \hspace*{\algorithmicindent}relative tolerance $\delta$, $K$, $\rho_{\textrm{old}}=\rho(\pmb{g}^{(0)},\pmb{f}^{(0)})$, $k\leftarrow 0$.
\begin{algorithmic}[1]
\Repeat
  \State $k \leftarrow k + 1$
  \If{$k>1$}
    \State Let $\rho_{\textrm{old}}=\rho_{\textrm{new}}$
  \EndIf
  \State Compute $\pmb{f}_r^{(k)}=\pmb{z}_1(\pmb{H}^H\pmb{g}^{(k-1)}{\pmb{g}^{(k-1)}}^T\pmb{H})$.
  \State Obtain $\pmb{f}^{(k)}=\textrm{sign}(\mathcal{R}(\pmb{f}_r^{(k)}))$.
  \State Compute $\pmb{g}_r^{(k)}=\pmb{z}_1 (\pmb{H}\pmb{f}^{(k)}{\pmb{f}^{(k)}}^T\pmb{H}^H)$.
  \State Obtain $\pmb{g}^{(k)}=\textrm{sign}(\mathcal{R}(\pmb{g}_r^{(k)}))$.
  \State Let $\rho_{\textrm{new}}=\rho(\pmb{g}^{(k)},\pmb{f}^{(k)})$.
\Until{$|\rho_{\textrm{new}}-\rho_{\textrm{old}}|/|\rho_{\textrm{old}}|<\delta$ or $k\geq K$}
\end{algorithmic}
	\hspace*{\algorithmicindent} \textbf{Output:} $(\pmb{g}^{\textrm{rq}},\pmb{f}^{\textrm{rq}})=(\pmb{g}^{(k)},\pmb{f}^{(k)})$.
\end{algorithm}

\begin{algorithm}[t]
\caption{Rayleigh quotient (modified) pre/post-coding}\label{algo2}\vspace{1mm}
\hspace*{\algorithmicindent}\textbf{Input:} $\pmb{H}$, initial coding vectors $\pmb{g}_r^{(0)} \in \mathbbm{C}^{N_R}$, $\pmb{f}_r^{(0)} \in \mathbbm{C}^{N_T}$ \hspace*{\algorithmicindent}relative tolerance $\delta$, $K$, $\rho_{\textrm{old}}=\rho(\pmb{g}_r^{(0)},\pmb{f}_r^{(0)})$, $k\leftarrow0$.
\begin{algorithmic}[1]
\Repeat
  \State $k \leftarrow k + 1$
  \If{$k>1$}
    \State Let $\rho_{\textrm{old}}=\rho_{\textrm{new}}$
  \EndIf
  \State Compute $\pmb{f}_r^{(k)}=\pmb{z}_1(\pmb{H}^H\pmb{g}_r^{(k-1)}{\pmb{g}_r^{(k-1)}}^T\pmb{H})$.
  \State Compute $\pmb{g}_r^{(k)}=\pmb{z}_1 (\pmb{H}\pmb{f}_r^{(k)}{\pmb{f}_r^{(k)}}^T\pmb{H}^H)$.
  \State Let $\rho_{\textrm{new}}=\rho(\pmb{g}_r^{(k)},\pmb{f}_r^{(k)})$
\Until{$|\rho_{\textrm{new}}-\rho_{\textrm{old}}|/|\rho_{\textrm{old}}|<\delta$ or $k\geq K$}
\end{algorithmic}
	\hspace*{\algorithmicindent} \textbf{Output:} $(\pmb{g}^{\textrm{rqm}},\pmb{f}^{\textrm{rqm}})=(\textrm{sign}(\mathcal{R}(\pmb{g}_r^{(k)})),\textrm{sign}(\mathcal{R}(\pmb{f}_r^{(k)})))$.
\end{algorithm}

\vspace{-0.3cm}
\subsection{QA-based design}

Firstly, we provide the basic background of the QA and then we present the proposed iterative algorithm and the associated QUBO formulations. 

\subsubsection{QA background}\label{back}
Adiabatic quantum computing is a promising tool to solve NP-hard combinatorial problems by using the principles of quantum mechanics. Specifically, according to the Adiabatic Theorem \cite{MCG}, if a quantum system is initially in the ground state (quantum state with the lowest energy) of an initial Hamiltonian and the system evolves/changes slowly, it will converge to the ground state of the final Hamiltonian. By encoding the final Hamiltonian to the desired optimization problem, the Adiabatic evolution can be used to solve complex combinatorial problems which are represented as QUBO instances. D-WAVE is a commercial analogue quantum computer/device that implements a noisy approximation of the quantum Adiabatic evolution called QA \cite{MCG,KAS}. This quantum device has received considerable interest lately due to the high number of available qubits (more than $5,000$ qubits in latest hardware architectures, which are arranged in a specific hardware topology/graph) and its friendly interface for remote access (D-WAVE Leap) \cite{WAVE}.

The process of mapping a QUBO problem (which represents the quadratic interconnection of the binary variables) into the limited D-WAVE QA hardware topology/graph, it is called {\it minor embedding}; since the hardware graph is not fully connected, this process enables logical channelling between qubits where logical variables are represented by chains of multiple physical qubits. The logical channelling is characterized by the strength of the logical links (called {\it ferromagnetic} coupling) and it is a critical parameter for the QA performance. In case that a logical chain is broken at the end of QA process ({\it i.e.,} physical qubits that form a logical qubit have different final values), an appropriate consensus algorithm is applied. 

In this work, we consider the heuristic algorithm {\it minorminer} for minor embedding which is included in the Ocean software development kit by default; in this case, a majority vote is applied to broken chains \cite{WAVE}. Due to practical non-idealities ({\it e.g.,} hardware limitations, Hamiltonian noise, temperature fluctuations {\it etc.}), the output of a single D-WAVE run (referred to as an anneal) is probabilistic and may be different than the ground state under question. To ensure efficient solution for the considered QUBO problem, it is a common practice to solve the same QUBO instance multiple times; the best solution among all the anneals, it is the final D-WAVE QA output. The anneal time and the number of anneals are critical design parameters which are tuned empirically.  It is worth noting that in QA, instead of analysing the computation time/complexity of a given algorithm, we mainly study the trade-off between the time and the probability that the QA output is correct.

\begin{algorithm}[t]
	\caption{QA-based pre/post-coding}\label{algo3}\vspace{1mm}
	\hspace*{\algorithmicindent}\textbf{Input:} $\pmb{H}$, $\pmb{g}_l^{(0)} \in \mathcal{S}^{N_R}$, $\pmb{f}_l^{(0)} \in \mathcal{S}^{N_T}$ with $l=1,\dots,L$, \hspace*{\algorithmicindent}relative tolerance $\delta$, $L$, $K$, $\rho_{\textrm{old}_l}= \rho(\pmb{g}_l^{(0)},\pmb{f}_l^{(0)})$, $k\leftarrow 0$.
	\begin{algorithmic}[1]
		\For{$l=1, 2,\ldots, L$} 
		\Repeat
		\State $k \leftarrow k + 1$
		\If{$k>1$}
		\State Let $\rho_{\textrm{old}_l}=\rho_{\textrm{new}_l}$
		\EndIf
		\State Compute $\pmb{Q}\!=\! \pmb{H}^H\pmb{g}_l^{(k-1)}{\pmb{g}_l^{(k-1)}}^T\pmb{H}$; convert to $\pmb{Q}_n$.
		\State \textbf{[D-WAVE]} Solve $\pmb{b}_f^*=\arg \min_{\pmb{b}_f}\pmb{b}_f^T(-\pmb{Q}_n)\pmb{b}_f$.
		\State Convert $\pmb{b}_f^*$ to spin vector $\pmb{f}_l^{(k)}$. 
		\State Compute $\pmb{R}= \pmb{H}\pmb{f}_l^{(k)}{\pmb{f}_l^{(k)}}^T\pmb{H}^H$ and convert to $\pmb{R}_n$.
		\State \textbf{[D-WAVE]} Solve $\pmb{b}_g^*=\arg \min_{\pmb{b}_g}\pmb{b}_g^T(-\pmb{R}_n)\pmb{b}_g$.
		\State Convert $\pmb{b}_g^*$ to spin vector $\pmb{g}_l^{(k)}$. 
		\State Let $\rho_{\textrm{new}_l}=\rho(\pmb{g}_l^{(k)},\pmb{f}_l^{(k)})$.
		\Until{$|\rho_{\textrm{new}_l}-\rho_{\textrm{old}_l}|/|\rho_{\textrm{old}_l}|<\delta$ or $k\geq K$}
		\State Obtain $(\pmb{g}_l,\pmb{f}_l)=(\pmb{g}_l^{(k)},\pmb{f}_l^{(k)})$.
		\EndFor
			\end{algorithmic}
		\hspace*{\algorithmicindent} \textbf{Output:} $(\pmb{g}^{\textrm{qa}},\pmb{f}^{\textrm{qa}})=\arg \max_{\pmb{g}_l,\pmb{f}_l} \rho(\pmb{g}_l,\pmb{f}_l)$.
\end{algorithm}

\subsubsection{An iterative QA algorithm}

We propose a new algorithm that incorporates a QA solver in the design of the analogue pre/post-coding vectors. By taking into account the binary combinatorial structure of the problem, we introduce an alternating optimization algorithm (similar to the Rayleigh quotient-based designs) that solves a QUBO problem at each iteration. The QUBO formulations are solved through a state-of-the-art D-WAVE QA device. More specifically, based on the expressions in \eqref{exp0}, we firstly fix the vector $\pmb{g}$ (by using a random solution) and we study the following combinatorial problem with respect to the pre-coding vector $\pmb{f}$, given by
\begin{align}
\max_{\pmb{f}} \pmb{f}^T(\pmb{H}^H\pmb{g}\pmb{g}^T\pmb{H})\pmb{f}=\max_{\pmb{f}} \pmb{f}^T\pmb{Q}\pmb{f}. \label{is1}
\end{align} 
We convert the pre-coding vector $\pmb{f}$ (containing spin variables $\{+1,-1\}$) to the binary vector $\pmb{b}_f$ (with entries in the set $\{0,1\}$) by using the transformation $\pmb{b}_f=\frac{1}{2}(\pmb{f}+\mathbbm{1})$ and thus the initial problem in \eqref{is1} is converted to a QUBO formulation
\begin{align}
\max_{\pmb{f}} \pmb{f}^T\pmb{Q}\pmb{f}\;=&\max_{\pmb{b}_f}\;4 \pmb{b}_f^T \pmb{Q}\pmb{b}-2\pmb{b}_f^T\pmb{Q}\mathbbm{1}-2\mathbbm{1}^T\pmb{Q}\pmb{b}_f \nonumber \\
&=\max_{\pmb{b}_f}\; \pmb{b}_f^T \mathcal{R} \bigg(4\pmb{Q}-4\textrm{diag}(\mathcal{R}\big(\pmb{Q}\mathbbm{1})\big) \bigg)\pmb{b}_f \nonumber \\
&=\max_{\pmb{b}_f}\; \pmb{b}_f^T \pmb{Q}_0 \pmb{b}_f.
\end{align}
The last mathematical step transforms the above (symmetric) quadratic matrix $\pmb{Q}_0$ to a form that is compatible with the D-WAVE QA solver \cite{JAM2}; we normalize the matrix such as all its entries take values in the range $[-1,+1]$. The calibrated matrix is written by $\pmb{Q}_n=\frac{\pmb{Q}_0}{\| \pmb{Q}_0\|_{\max}}$ and thus the D-WAVE compatible QUBO problem is written as
\begin{align}
\min_{\pmb{b}_f}\;\pmb{b}_f^T (-\pmb{Q}_n) \pmb{b}_f.
\end{align}
The returned binary vector $\pmb{b}_f$ is converted back to the spin vector $\pmb{f}$. By using the principles of alternating optimization, we then fix $\pmb{f}$ (by using the previous solution), and we solve the following QUBO problem (the transformation follows similar analytical steps)
\begin{align}
\max_{\pmb{g}}\; \pmb{g}^T(\pmb{H}\pmb{f}\pmb{f}^T\pmb{H}^H)\pmb{g}&=\max_{\pmb{g}}\; \pmb{g}^T\pmb{R}\pmb{g} \nonumber \\
&\Rightarrow \min_{\pmb{b}_g}\; \pmb{b}_g^T(-\pmb{R}_n)\pmb{b}_g,
\end{align}
where $\pmb{b}_g$ is the binary representation of the vector $\pmb{g}$. The above alternating optimization process is repeated until convergence or a maximum number of iterations $K$ is achieved. Due to the discrete/binary nature of the problem, the final solution is sensitive to the initial conditions ({\it i.e.,} initial vector $\pmb{g}^{(0)})$ which could result in convergence to a local maximum. To overcome this limitation and facilitate the iterative algorithm to escape local maxima and get closer to the optimal solution, we introduce $L$ independent execution of the algorithm by using different initial conditions ($\pmb{g}_l^{(0)}$ with $l=1,\ldots,L$); from all the returned solutions, we keep the one corresponding to the maximum SNR objective function. The pseudocode of the QA-based iterative algorithm is given in Algorithm \ref{algo3}.     

\vspace{-0.1cm}
\section{Evaluation}

Computer simulations and experimental (D-WAVE) results are carried-out to evaluate the performance of the proposed schemes. We consider two basic MIMO scenarios with $N_T=N_R=\{8,10\}$ and system parameters $\sigma^2=1$, $P=0$ dB, $K=10$, $L=10$, $\delta=0.01$. For our D-WAVE QA experiments, we use the D-WAVE Leap interface with the {\it Advantage\_system1.1} quantum processing unit \cite{WAVE} with $1,000$ anneals, $1$ $\mu\textrm{sec}$ annealing time, and a ferromagnetic coupling parameter $\overline{J}_F=3$; these parameters are tuned empirically. For the minor embedding process, we follow the discussion in Sec. \ref{back}. Due to the probabilistic nature of the QA process, each anneal returns distinct solutions; the best solution among all the anneals (corresponding to maximum SNR) is selected as the final D-WAVE QA output. The single-mode eigenbeamforming \cite{HAM} is considered as a performance benchmark/bound.

\begin{figure}
\centering
\includegraphics[width=0.8\linewidth]{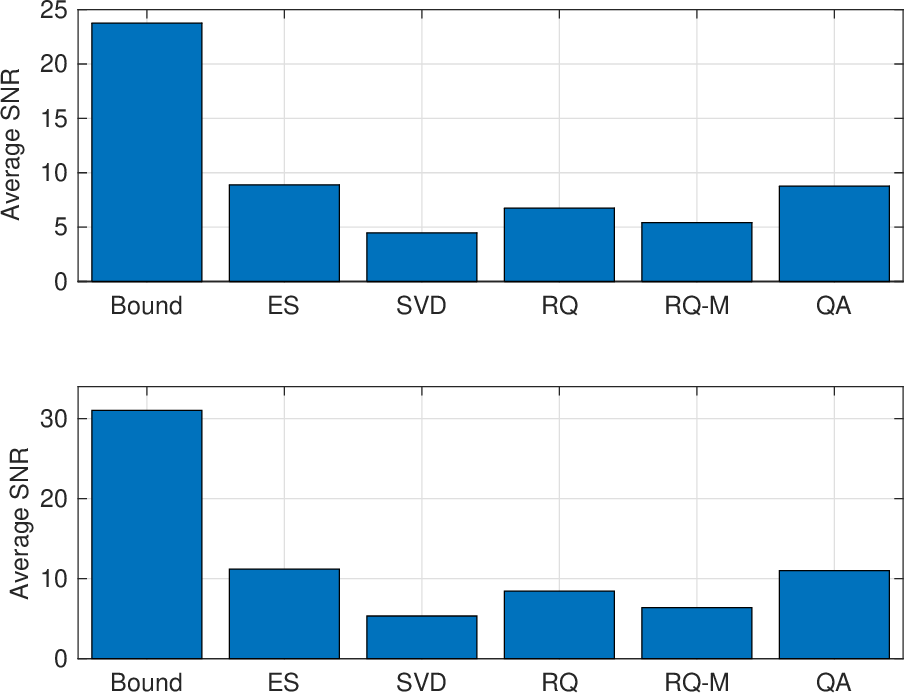}
\vspace{-0.3cm}
\caption{Average SNR performance for the proposed vector designs; [top] MIMO with $N_T=N_R=8$, [bottom] MIMO with $N_T=N_R=10$.} \label{figres1}
\end{figure}

\begin{figure}
\centering
	\includegraphics[width=0.82\linewidth]{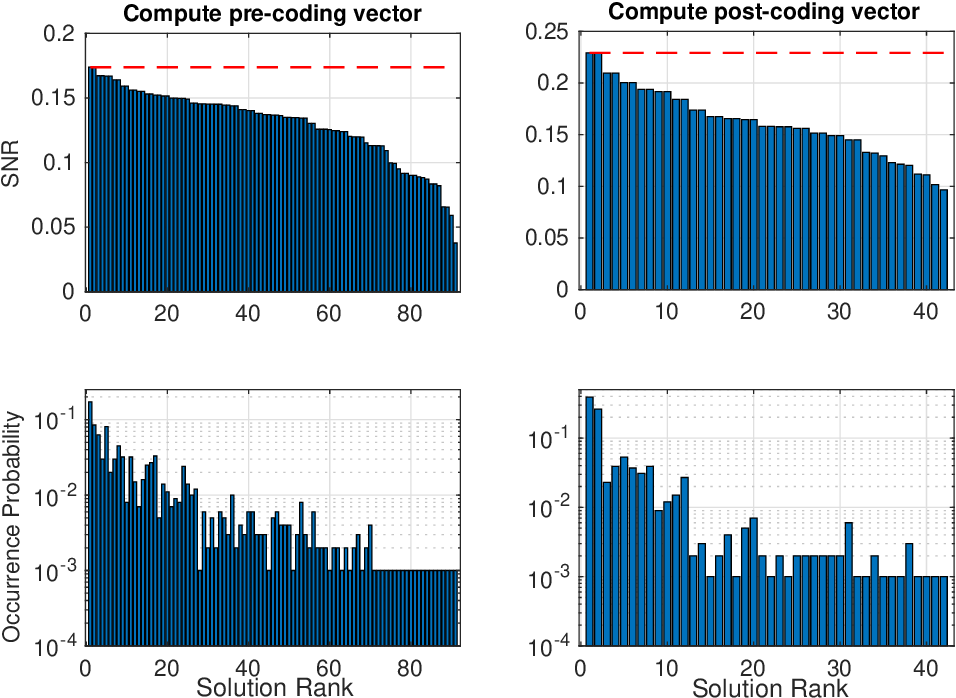}
	\vspace{-0.35cm}
	\caption{D-WAVE performance for a single iteration and channel realization with $N_T=N_R=8$; ES benchmark (dashed line). [top] SNR performance of the returned solutions in descending order, [bottom] Occurrence probability of the returned solutions.}\label{figres2}
\end{figure}

\begin{figure}
\centering
	\includegraphics[width=0.82\linewidth]{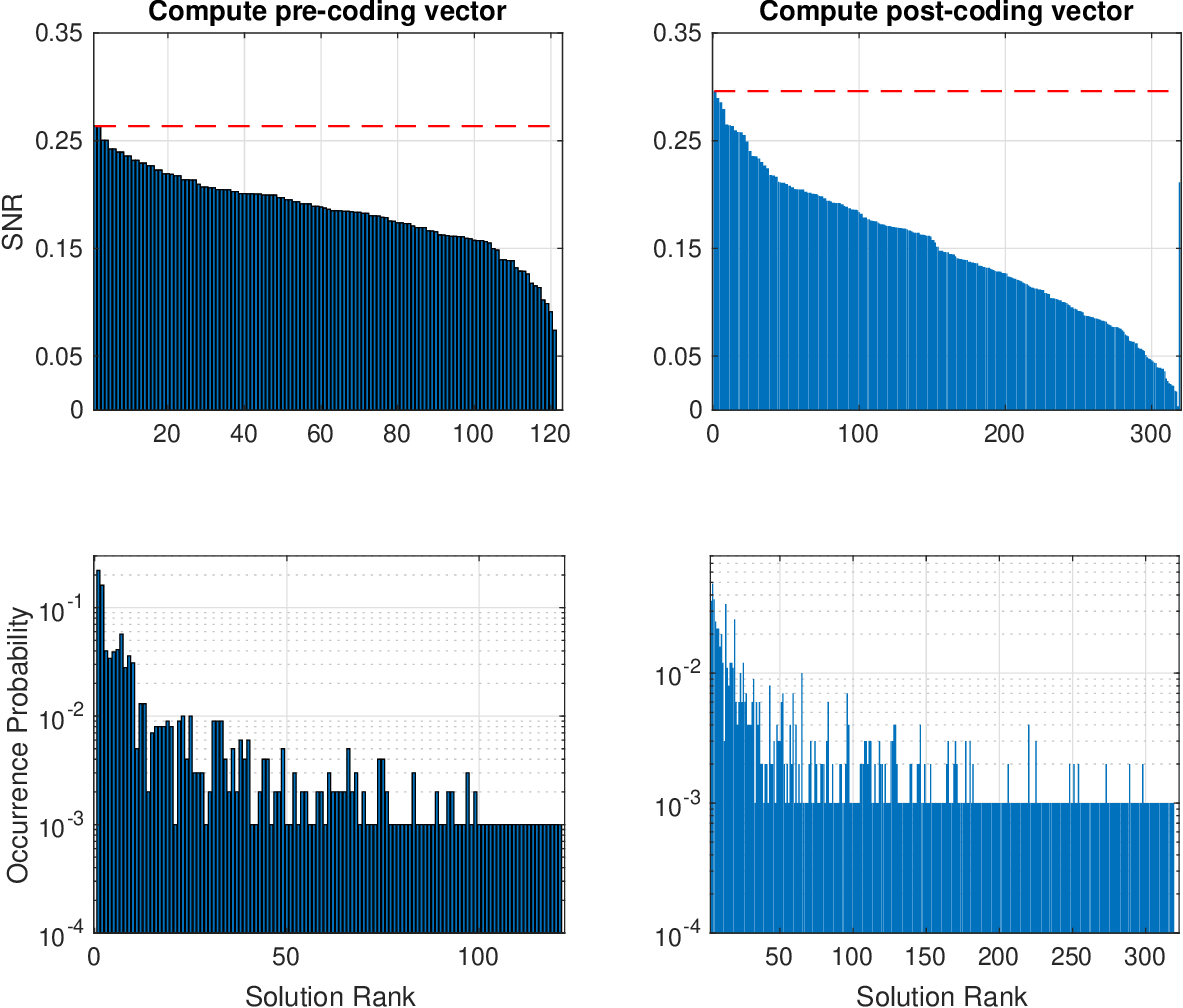}
	\vspace{-0.35cm}
	\caption{D-WAVE performance for a single iteration and channel realization with $N_T=N_R=10$; ES benchmark (dashed line). [top] SNR performance of the returned solutions in descending order, [bottom] Occurrence probability of the returned solutions.}\label{figres3}
\end{figure}

In Fig. \ref{figres1}, we compare the performance of the proposed schemes in terms of the average achieved SNR over $1,000$ channel realizations. We observe that the QA solution achieves a near-optimal performance (similar to ES) while ensuring polynomial processing time. Classical heuristics (SVD, RQ, RQ-M) discretize in a different way the complex-valued solution of the associated continuous problem and are suboptimal in comparison to ES and QA; among them, the RQ algorithm outperforms RQ-M and SVD approaches. Classical heuristics sacrifice performance but are appropriate for scenarios with extremely low-complexity requirements; the QA-based algorithm combines near-optimal performance with polynomial time processing and  is promising for future massive MIMO systems with quantum computing capabilities.

In Fig. \ref{figres2}, we focus on the performance of the D-WAVE QA solver in one basic algorithmic iteration for a single channel realization. Specifically, we show the returned solutions over $1,000$ anneals which are ordered in descending order of their SNR values as well as the associated occurrence probabilities. The sub-figures on the left refer to the pre-coding design $\pmb{f}_l^{(k)}$ (given a post-coding vector $\pmb{g}_l^{(k-1)}$); it can be seen that the best returned solution achieves the optimal ES performance and occurs with a probability $\approx 0.2$. It is worth noting that the first two (in the order) solutions are equivalent in terms of SNR performance, due to computation symmetry of the problem (see Sec. \ref{es1}); therefore, the total occurrence probability of the optimal solution becomes $\approx 0.3$. The sub-figures on the right refers to the next algorithmic step associated with the post-coding vector design $\pmb{g}_l^{(k)}$ (given the pre-coding vector from the previous step). We observe that the best two returned solutions are also equivalent and achieve the ES performance, while the final SNR value is slightly improved in comparison to the previous step of the algorithm. Similar observations are obtained in Fig. \ref{figres3} for a more complex MIMO setup with $N_T=N_R=10$.

To demonstrate the time performance of the D-WAVE QA, in Table \ref{tab1}, we show time results associated with the two QUBO problems in Figures \ref{figres2} and \ref{figres3}, respectively. The entire latency \cite{KAS} consist of {\it programming time} (pro-processing time to load the QUBO weights), {\it anneal time} (actual implementation of the QA, {\it i.e.,} $1$ msec since $1,000$ anneals in total), {\it readout time} (time to read the result at each anneal), {\it readout delay} (time to reset the qubits between anneals), and {\it post-processing time} (time to process the returned solutions). It can be seen that the real quantum processing time (anneal time) which is a design parameter that is tuned empirically, corresponds to a small fraction of the total processing time. The non-quantum time latency ({\it i.e.,} programming time, readout time/delay) is due to the conventional analogue/digital circuits that support the quantum device and seems to be (currently) the main bottleneck for delay-sensitive applications; it is technology related and is expected to be significantly lower in the near future \cite{KAS}.

\begin{table}
\centering
\caption{Indicative timing results for $1,000$ anneals ($\mu$sec)}
\vspace{-0.3cm}
\resizebox{\linewidth}{!}{
\begin{tabular}{|l||l|l|}
\hline
{\bf Time} & {\bf Case I ($N_T=N_R=8$)} & {\bf Case II ($N_T=N_R=10$)} \\
\hline 
Programming time & $15,762$ & $15,761$ \\
\hline 
Anneal time & $1,000$ & $1,000$ \\
\hline 
Readout time & $57,380$ & $138,180$ \\
\hline 
Readout delay & $20,540$ & $20,540$\\
\hline 
Post processing & $875$ & $243$ \\
\hline 
QPU Access time & $94,682$ & $175,481$ \\
\hline
\end{tabular}}\label{tab1}
\end{table}

\vspace{-0.1cm}

\end{document}